\newtheorem{theorem}{Theorem}
\newtheorem{proposition}[theorem]{Proposition}
\newtheorem{remark}[theorem]{Remark}
\begin{document}


%
%

\title{Kraus operators and symmetric groups}

\author{Alessia Cattabriga\footnote{A.~Cattabriga has been supported by the "National Group for Algebraic and Geometric Structures, and their Applications" (GNSAGA-INdAM) and University of Bologna, funds for selected research topics.}, Elisa Ercolessi\footnote{E.E. is partially supported by INFN through the project “QUANTUM” and by the project QuantHEP of the QuantERA ERA-NET Co-fund in Quantum Technologies}, Riccardo Gozzi, Erika Meucci }








\maketitle


\begin{abstract}
In the contest of open quantum systems, we study a class of Kraus operators whose definition relies on the defining representation of the symmetric groups. We analyze the induced orbits as well as the limit set and the degenerate cases.
\end{abstract}


\section{Introduction and preliminaries}

We are interested in studying open quantum systems, that is systems that are free to interact with the environment or with other systems. The study of open systems is useful in fields such as quantum optics, quantum measurement theory, quantum statistical mechanics and quantum cosmology. Moreover, the study of composite systems is at the heart of quantum computation and quantum information, where, for examples, concepts like entanglement can have applications in devising algorithms and protocols, such as quantum teleportation, that do not have a classical analogue.

In elementary quantum mechanics, the state of a closed quantum system is represented by a ray \cite{EMM} in a separable Hilbert space $\mathbf{v} \in {\cal H}$, i.e. by an equivalence class of vectors $[ \mathbf{v} ]$, $\mathbf{v} \in {\cal H}$, with respect to the relation: $\mathbf{v} \sim \lambda \mathbf{v} $ with $\lambda \in \mathbb C - \left\{ 0\right\}$. Such equivalence class can be represented via the density matrix $\rho_{\mathbf{v}} \equiv \mathbf{v} \mathbf{v}^\dagger / \| \mathbf{v}\|^2$ (where $\mathbf{v}^\dagger$ is in the dual space ${\cal H}^*\simeq {\cal H} $), which is a rank-one projector or, more precisely,  a bounded, self-adjoint, positive definite, unit-trace operator such that:
\begin{equation} \label{pure}  
\rho_{\mathbf{v}} ^2=\rho_{\mathbf{v}} . 
\end{equation}
A system whose density matrix satisfies the latter condition is said to be {\it pure}. In general, as we will explain shortly below, it is necessary to consider a more general kind of density matrices that are constructed out of a statistical mixture $\{ \rho_j,p_j\}_{j=1}^N$, where the $\rho_j$'s are pure density matrices and the $p_j$'s are probabilities, therefore satisfying $0\leq p_j\leq 1$ and $\sum_{j=1}^N p_j =1$. Such a density matrix (called {\it mixed}) is obtained by setting 
\begin{equation}\label{mixed}
\rho = \sum_{j=1}^N p_j \rho_j
\end{equation}
and is a bounded, self-adjoint, positive definite, unit-trace operator, with now $\rho^2\neq \rho $.\\
In the following we will be interested in the case in which the Hilbert space is finite dimensional with $\dim({\cal H})=n$, a fact we will assume from now on.\\
When considering an open quantum system, i.e. a (sub)system $A$ in interaction with an environment $B$, the Hilbert space representing the total system is given by $\mathcal H_A\otimes\mathcal H_B$, where the general element of $\mathcal H_A\otimes\mathcal H_B$ will be $\psi_{AB}=\sum_{i\in I, j\in J}a_{ij}\mathbf e^A_i\otimes \mathbf e^B_j$ where $\{\mathbf e^A_i\mid i\in I\}$ and  $\{\mathbf e^B_j\mid j\in J\}$ are orthonormal basis of $\mathcal H_A$ and  $\mathcal H_B$ respectively and $\sum_{i\in I, j\in J}|a_{ij}|^2=1$. The density matrix representing the quantum state of the subsystem $A$ is obtained by taking the partial trace over the environment $B$: $\rho_A = \textup{Tr}_B[ \rho_{\psi_{AB}}]$. A very well known theorem \cite{PRE} states that the density matrix $\rho_A$ is pure if and only if $\psi_{AB}$ is of the form $\mathbf v_A\otimes \mathbf v_B$ with $\mathbf v_A\in\mathcal H_A$ and $\mathbf v_B\in\mathcal H_B$, i.e. it is a {\it separable} state.  In all other cases, i.e. when the state $\psi_{AB}$ is entangled, $\rho_A$ will represent a mixed state. \\
The space of quantum states can be endowed of interesting geometrical structures. The set of pure states is a complex projective space and indeed a Kahler manifold \cite{EMM}, that can be embedded in the the Lie algebra of self-adjoint matrices as the (co)-adjoint orbit of the unitary group $U(N)$ of rank-one density matrices. The latter description generalizes also to rank-k mixed states \cite{CCM,CES}, so that the full space of states can be seen as the union of orbits of the unitary groups, each of them been a complex manifold endowed by a metric, a symplectic form and a compatible complex structure.\\
From a more algebraic point of view, let us notice that the space of pure states can be seen as the extremal points of the positive cone in the algebra of self-adjoint operators, generated by positive definite and unit-trace matrices. 

The time evolution of a closed quantum system is determined by the Schroedinger equation \cite{PRE} or, when the Hamiltonian operator $\mathbb H$ is time-independent, by the unitary operator $U(t) = \exp[-i \mathbb H t/\hbar]$ via:
\begin{equation} \label{sch}
\rho(t) = U(t) \rho(t=0) U(t)^\dagger. 
\end{equation}
This is the evolution also of the density matrix of the total system $A\cup B$, which can be seen as isolated, whereas the evolution of the subsystem $A$ which is obtained by taking the partial trace: $\rho_A(t) = \textup{Tr}_B[ \rho_{\psi_{AB}}(t)] $. Let us notice that,
contrary to what happens for a closed system, open quantum dynamics may change the spectrum as well as the rank of the density matrix. \\
The dynamics of open quantum systems is generated by the so-called Gorini-Kossakowski-Sudarshan- Linbland (GKLS) equation (see ref. \cite{CP} for a, also historical, review):
\begin{equation}\label{GKLS}
\mathbf L(\rho)=-i[\mathbb H,\rho]-\frac12 \sum_{j=1}^N \left\{V_j^{\dagger} V_j, \rho\right\}+\sum_j  V_j\rho V_j^{\dagger},
\end{equation}
where $\mathbb H$ is the Hermitian Hamiltonian operator and the $V_j$'s ($N=1,2,\cdots ,n^2-1$) are arbitrary (bounded) operators\footnote{As usual we denote with $V^{\dagger}$ the adjoint of $V$, with $[\cdot,\cdot]$ the commutator and with  $\{\cdot,\cdot\}$  the  anticommutator.}. \\
Eq. (\ref{GKLS}) has a clear geometrical interpretation. Indeed, it can be shown \cite{CCILM} that the dynamical evolution described by this equation defines a vector field $\Gamma= X+Y+Z,$ that can be decomposed into three vector fields related with the three addends of the GKLS equation. More precisely, $X$ is a Hamiltonian vector field whose flow preserves the spectrum of $\rho$ (hence moving on a given co-adjoint orbit), $Y$ is a gradient-like vector field whose flow changes the spectrum but preserves the rank while $Z$  is a vector field corresponding to a flow that changes the rank of $\rho$.      

In this paper we take  a different, but equivalent perspective \cite{RH}, according to which the dynamic of an open quantum system with density matrix $ \rho_A(t_0)\equiv\rho(t_0)$ is described by means of the so called Universal Dynamical Maps (UDM), that is a trace-preserving linear completely positive\footnote{Here for completely positive we mean that the operator  $K_{[t_0,t_1]}\otimes 1_B$  is positive for any possible extension of $\mathcal H_A$ to $\mathcal H_A\otimes\mathcal H_B$. We refer the interested reader to \cite{CP} for a detailed discussion of why completely positivity and not simply positivity is required.} definite map defined as
\begin{equation}\label{UDM}
{\cal E}_{K([t_0,t_1])}:\; \rho(t_0)\to \rho(t_1) = \sum_{\alpha} K_{\alpha}(t_1,t_0)\rho(t_0)K_{\alpha}(t_1,t_0)^{\dagger}
\end{equation}
given an initial configuration of the system at time $t=t_0$ encoded by the density matrix $\rho(t_0)$. The operators  $K_{\alpha}(t_1,t_0)$, for $\alpha\in\mathcal A$ are called \textit{Kraus operators}: they do not depend on the initial condition $\rho(t_0)$, but, as the indices should suggest, just on the time interval $[t_0,t_1]$. We will call Kraus map a linear combination of Kraus operators. Moreover, to ensure $\textup{Tr}[\rho(t_1)]=1$, the Kraus operators must satisfy the following condition
\begin{equation}\label{Krauss}
\sum_{\alpha\in\mathcal A} K_{\alpha}(t_1,t_0)K_{\alpha}(t_1,t_0)^{\dagger}=\mathbf 1.
\end{equation}
Whenever the (super)-operators (\ref{UDM}) satisfy also:
\begin{equation}\label{semigroup}
{\cal E}_{K([s,t])}\circ {\cal E}_{K[(t,0)]}  = {\cal E}_{K([s,0])} , \, \forall s \geq t \geq 0, 
\end{equation}
eq. (\ref{UDM}) defines a one-parameter semigroup of completely positive maps, of which the operator $\mathbf L$ of eq. (\ref{GKLS}) is the generator. In this case the dynamics is called quantum Markovian \cite{RH}. Le us remark that the decomposition of such a map into Kraus operators is not unique, a question that will be considered in the following. 

In this paper, we are interested in characterizing the non-unitary part of the dynamic and we set $\mathbb H=0$. 
Notice that the GKLS equation is invariant under any unitary transformation, since $\rho(t) \to U \rho(t) U^\dagger$, $V_j \to U V_j U^\dagger$, for all $j$. Also, we can always find a unitary transformation such that, at the initial time, we can write the density matrix in the diagonal form $\rho(t=0)=\textup{diag}(\lambda_1,\ldots,\lambda_n)$, with $\lambda_i\geq 0$ and $\lambda_1+\cdots+\lambda_n=1$. Thus we can consider a GKLS equation of the form 
\begin{equation}\label{GKLS1}
\mathbf L(\rho)= -\frac{1}{2} \sum_{j=1}^N\left\{V_j^{\dagger}V_j, \rho\right\}+\sum_{j=1}^N V_j\rho V_j^{\dagger},
\end{equation}
with  $\rho$ diagonal.  This situation encompasses a series of interesting cases in physics, such as (when N=1) the so-called Quantum Poisson and Gaussian Semigroups \cite{CCILM}. 

In order to describe explicitly this type of dynamics, we take an algebraic  approach and consider Kraus operators associated to elements of the symmetric group $\Sigma_n$ via the defining representation. In Section \ref{simmetrico}, we  recall the basic notions of this representation  and associate a  Kraus map  to each element  of $\mathbb C[\Sigma_n]$, the group algebra of $\Sigma_n$. We give conditions on the elements of $\mathbb C[\Sigma_n]$ giving rise to  Kraus maps with admissible action and reduce the study of the orbits of the dynamic to those associated to cyclic subgroups of $\Sigma_n$. In Section \ref{azione}, we compute explicitly the  orbits in the cyclic case  as well as the limit set of the dynamics. We  will also describe what happens in the degenerate cases, that is the cases in which the initial density matrix is not generic (i.e., the cardinality of the spectrum of $\rho$ is less than the order of the matrix).

\section{Defining representation of $\Sigma_n$ and associated Kraus maps}
\label{simmetrico}
In this section, after recalling  some classical notion on the defining representation of $\Sigma_n$ (see \cite{S}), we describe how to associate a Kraus map to elements of $\mathbb C[\Sigma_n]$, the group algebra of $\Sigma_n$.\\

Let $\Sigma_n$ be the symmetric group on $n$ letters. The $n$-dimensional \textit{defining} representation $\chi:\Sigma_n\to GL_n(\mathbb C)$ of $\Sigma_n$ is given by %
\begin{equation}
\chi(\sigma)=R_{\sigma}\qquad\qquad \textup{with\ } \left(R_{\sigma}\right)_{ij}=\left\{\begin{array}{l}1\quad\textup{if } \sigma(j)=i\\
                                                                              0 \quad\textup{otherwise}
                                                                             \end{array}\right. . 
\end{equation}\label{rep}
The defining representation is unitary (i.e., the image of $\chi$ is contained in $U(n)$) and  reducible. Indeed, the 1-dimensional subspace $W$ spanned by $\mathbf e_1+\mathbf e_2+\cdots +\mathbf e_n$, with $\mathbf e_i$ the $i$-th vector of the canonical basis of $\mathbb C^n$, is invariant under the action of $\chi(\Sigma_n)$ and $\chi$ restricts to the trivial action on $GL(W)$. Moreover, $\chi$ is completely reducible: indeed, if  $W^{\perp}$ denotes the orthogonal complement with respect to the standard hermitian product on $\mathbb C^n$, also  $W^{\perp}$ is  invariant under the action of  $\chi(\Sigma_n)$. It is also possible to prove that the $(n-1)$-dimensional representation of $\Sigma_n$ induced by $\chi$ into $GL(W^{\perp})$ is irreducible.\\
If we identify $\mathbb C^n$ with the vector space $D_n(\mathbb C)$ of diagonal matrices with complex entries, then  $\Sigma_n$ acts on $D_n(\mathbb C)$ as 
$$\sigma \cdot \textup{diag}(\lambda_1,\ldots,\lambda_n)= R_{\sigma}  \textup{diag}(\lambda_1,\ldots\lambda_n) R_{\sigma}^{-1} =\textup{diag}(\lambda_{\sigma(1)},\ldots,\lambda_{\sigma(n)}).$$
Since $W$ is an invariant subspace with respect to $\chi$, then the trace of a matrix is invariant under this action.

Let $\mathbb C[\Sigma_n]$ denotes the group algebra of $\Sigma_n$. The defining representation of $\Sigma_n$ naturally induces a representation of   $\mathbb C[\Sigma_n]$ into $M_n(\mathbb C)$, that we still denote with $\chi$, given by 
\begin{equation}\label{galg}
\chi\left(\sum_{\sigma\in\Sigma_n}c_{\sigma}\sigma \right)\mathbf v=\sum_{\sigma\in\Sigma_n}c_{\sigma}\left(R_{\sigma}\mathbf v\right),
\end{equation}
with $c_{\sigma}\in\mathbb C$, for all $\sigma$,  and $\mathbf v\in\mathbb C^n$. 

We want now to introduce a time dependence: for each $\sigma\in\Sigma_n$ we choose $\mathbb C$-valued smooth functions $[0,+\infty)\ni t \mapsto c_{\sigma}(t)\in \mathbb C$ and   consider the map  
\begin{equation}\label{time}
[0,+\infty)\ni t\mapsto \sum_{\sigma\in\Sigma_n}c_{\sigma}(t)\sigma\in \mathbb C[\Sigma_n].
\end{equation}
If we look at the operator obtained trough $\chi$, we have that:
$$\sum_{\sigma\in\Sigma_n}\left(c_{\sigma}(t)R_{\sigma}\right) \left(c_{\sigma}(t)R_{\sigma}\right)^{\dagger}=\sum_{\sigma\in\Sigma_n}c_{\sigma}(t)
\overline{c}_{\sigma}(t)R_{\sigma}R_{\sigma}^{\dagger}=\left(\sum_{\sigma\in\Sigma_n}c_{\sigma}(t)\overline{c}_{\sigma}(t)\right) \textup{Id}_n,$$
where $\overline{c}_{\sigma}(t)$ denotes the complex conjugate function. So if 
\begin{equation}\label{condition}
\sum_{\sigma\in\Sigma_n}c_{\sigma}(t)\overline{c}_{\sigma}(t)=\mathbf{1},
 \end{equation}
where $\mathbf{1}$ denotes the constant function equal to 1,  the element $\sum_{\sigma\in\Sigma_n}c_{\sigma}(t)\sigma$ acts on $D_n(\mathbb C)$,  via $\chi$, as a Kraus operator. 

Even if this condition on the coefficients is satisfied, a generic element of $\sum_{\sigma\in\Sigma_n}c_{\sigma}(t)\sigma\in\mathbb C[\Sigma_n]$ might not yield a suitable one-parameter semigroup unless: i) it is completely positive; ii) it satisfies the time condition (\ref{semigroup}). \\

As for the second condition, let us notice that,  if we consider 
$$\chi\left(\sum_{\sigma\in\Sigma_n}c_{\sigma}(t)\sigma \right)=\sum_{\sigma\in\Sigma_n}c_{\sigma}(t)R_{\sigma},$$ 
such that $\mathcal N=\{\sigma\in\Sigma_n\mid c_{\sigma}(t)\ne \mathbf 0\}$ is a subgroup of $\Sigma_n$, it is possible to choose opportunely the coefficients $c_{\sigma}(t)$ so that the operators satisfy the time conditions. Indeed, given a subgroup $S$ of $\Sigma_n$, we  associate to it the operator  
\begin{equation}
K_{S}=g(t) \textup{Id}_n + f(t) \sum_{\sigma\in S} R_{\sigma}
\end{equation}
giving  the evolution function
\begin{equation}
F^S_{(t,0)}(\rho(0))=\rho(t) = g^2(t) \rho(0) + f^2(t) \sum_{\sigma\in S} R_{\sigma} \rho(0) R_{\sigma}^{-1},
\end{equation}
where
\begin{equation}
g(t) = \sqrt{\frac{1}{|S|}\left(1+ (|S |-1) e^{-t}\right)} \mbox{ and } f(t) = \sqrt{\frac{1}{|S|}(1-e^{-t})},
\end{equation}
with $|S|$ the order of $S$. It is immediate to check that $K_{S}$ satisfies (\ref{condition}). \\

Given $S$ and $T$ subgroups of $\Sigma_n$, we say that $K_S$ and $K_T$ are \emph{equivalent}, and write $K_S\cong K_T$, if they determine the same evolution function that is 
$$F^S_{(t,0)}(\rho(0))=F^T_{(t,0)}(\rho(0))$$
for each $t\in[0,+\infty)$ and each $\rho(0)\in D_n(\mathbb C)$. \\
Since our aim is to study all the possible evolution functions, we want to look at Kraus maps associated to subgroups of $\Sigma_n$ up to equivalence. On this regard, we have the following result. 

\begin{proposition}  \label{equivKop}
Two Kraus maps $K_S$ and $K_T$, associated to subgroups $S,T\subseteq\Sigma_n$, are equivalent if and only if the partition of $\{1,2,\ldots,n\}$  associated to the orbits of the action of $S$ and $T$ onto $\{1,2,\ldots,n\}$ is the same. 
\end{proposition}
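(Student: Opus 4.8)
The plan is to compute the evolution function $F^S_{(t,0)}$ explicitly on a diagonal matrix and to show that it depends on $S$ only through the partition of $\{1,\ldots,n\}$ into $S$-orbits; the proposition then follows by recovering this partition from the evolution. Write $\rho=\textup{diag}(\lambda_1,\ldots,\lambda_n)$ and observe that $R_\sigma\rho R_\sigma^{-1}$ is again diagonal, with $i$-th entry $\lambda_{\sigma^{-1}(i)}$ (the orbit computation is insensitive to the $\sigma$ versus $\sigma^{-1}$ convention, since $S$ is closed under inversion). Let $\mathcal O_i$ denote the $S$-orbit of $i$. Grouping the elements of $S$ by the value $\sigma^{-1}(i)$ and invoking the orbit--stabilizer theorem --- each $j\in\mathcal O_i$ satisfies $\sigma^{-1}(i)=j$ for exactly $|S|/|\mathcal O_i|$ elements $\sigma\in S$, and no $\sigma\in S$ carries $i$ out of $\mathcal O_i$ --- yields
\[ \sum_{\sigma\in S}\lambda_{\sigma^{-1}(i)}=\frac{|S|}{|\mathcal O_i|}\sum_{j\in\mathcal O_i}\lambda_j. \]
Substituting this into $F^S_{(t,0)}$, separating the contribution of $\sigma=\textup{id}$ (which merges with the $g^2\rho$ term) and using the coefficient identities $g^2-f^2=e^{-t}$ and $|S|f^2=1-e^{-t}$, so that every bare factor $|S|$ cancels, I obtain
\[ \big(F^S_{(t,0)}(\rho)\big)_{ii}=e^{-t}\lambda_i+\frac{1-e^{-t}}{|\mathcal O_i|}\sum_{j\in\mathcal O_i}\lambda_j, \]
so that the $i$-th diagonal entry relaxes, at a rate independent of $S$, towards the average of the $\lambda_j$ over the orbit of $i$.

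The backward implication is then immediate: if $S$ and $T$ induce the same partition of $\{1,\ldots,n\}$, then $\mathcal O_i$ is the very same subset for both groups, for every $i$, so the two expressions above coincide and $F^S_{(t,0)}=F^T_{(t,0)}$ for all $t\geq 0$ and all $\rho\in D_n(\mathbb C)$; that is, $K_S\cong K_T$.

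For the forward implication I would show that the orbit partition is encoded in the dynamics. Fix any $t>0$ and regard $F^S_{(t,0)}$ as a linear endomorphism of $D_n(\mathbb C)$, with matrix $M^S$ in the basis of diagonal matrix units. The displayed formula shows that, for $i\neq j$, the entry $(M^S)_{ij}$ equals $(1-e^{-t})/|\mathcal O_i|$ when $j\in\mathcal O_i$ and is $0$ otherwise. Since $1-e^{-t}\neq 0$, the pattern of nonzero off-diagonal entries of $M^S$ is exactly the relation ``$i$ and $j$ lie in the same $S$-orbit''. Hence $K_S\cong K_T$ forces $M^S=M^T$, which forces the same-orbit relations of $S$ and $T$ to coincide, and therefore $S$ and $T$ determine the same partition of $\{1,\ldots,n\}$.

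The one genuinely delicate step is the coefficient bookkeeping in the first paragraph: one must account correctly for the identity permutation (absorbed into the $g^2\rho$ term) so that the cancellations $g^2-f^2=e^{-t}$ and $|S|f^2=1-e^{-t}$ eliminate every explicit occurrence of $|S|$. This cancellation is precisely what makes the statement non-trivial, and true even for subgroups of different orders sharing the same orbits, such as $\langle(1\,2)(3\,4)\rangle$ and $\langle(1\,2),(3\,4)\rangle$ in $\Sigma_4$: without it a surviving factor $1/|S|$ in the coefficient of $\lambda_i$ would distinguish the evolution functions of two groups having identical orbit partitions.
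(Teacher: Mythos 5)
Your proof is correct, and it takes a genuinely different route from the paper's. The paper argues by matching terms: it asserts that $K_S\cong K_T$ holds if and only if $|S|=|T|$ and $\sum_{\sigma\in T}R_{\sigma}\rho(0)R_{\sigma}^{-1}=\sum_{\sigma'\in S}R_{\sigma'}\rho(0)R_{\sigma'}^{-1}$, deduces from this a bijection $T\to S$, and reads off the orbit condition from the fact that the $R_{\sigma}$ are permutation matrices. You instead compute the evolution function in closed form, $\bigl(F^S_{(t,0)}(\rho)\bigr)_{ii}=e^{-t}\lambda_i+\frac{1-e^{-t}}{|\mathcal O_i|}\sum_{j\in\mathcal O_i}\lambda_j$, via orbit--stabilizer and the identities $g^2-f^2=e^{-t}$, $|S|f^2=1-e^{-t}$; this makes the ``if'' direction a one-line observation and reduces the ``only if'' direction to reading off the support of a fixed linear map on $D_n(\mathbb C)$. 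What your approach buys is exactly the case your own example isolates: subgroups with the same orbit partition but different orders, such as $\langle(1\,2)(3\,4)\rangle$ and $\langle(1\,2),(3\,4)\rangle$ in $\Sigma_4$. These are equivalent by the proposition (and visibly so from your formula), yet they contradict the paper's opening claim that equivalence forces $|S|=|T|$; the term-by-term matching cannot see the cancellation of the explicit $1/|S|$ factors, which is the heart of the matter and which your coefficient bookkeeping makes transparent. The paper's argument is shorter and suffices for the use the authors make of the result (choosing one cyclic representative per partition), but your computation is the one that actually establishes the stated equivalence in full generality, and it is essentially the same calculation the paper redoes later, for cyclic $S$ only, when deriving $\rho_{\sigma}(t)=e^{-t}\rho(0)+(1-e^{-t})B$.
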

\begin{proof}
Notice that $K_S$ and $K_T$  are equivalent if and only if $|S|=|T|$ and 
$$
\sum_{\sigma \in T} R_{\sigma}  \rho(0) R_{\sigma}^{-1} = \sum_{\sigma' \in S} R_{\sigma'}  \rho(0) R_{\sigma'}^{-1}
$$
Therefore, there is a bijective map $T \rightarrow S$ that sends $\sigma \in T$ in $\sigma' \in S$.
Since the $R$'s matrices are permutation matrices, this happens if and only if the orbits of the action of $S$ and $T$ onto $\{1,2,\ldots,n\}$ are the same.
\end{proof}

The previous proposition allows us to reduce to the case in which $S$ is a cyclic subgroup. Indeed, it is enough to select one subgroup $S$ for each partition of $\{1,2,\ldots,n\}$ and we can always choose a cyclic subgroup: given a partition $\{p_1,\dots,p_k\}$ of $\{1,2,\ldots,n\}$  we can take the cyclic subgroup ge\-ne\-rated by $\sigma=c_1\cdots c_k$, where $c_i$ is any cycle permuting all the elements  of $p_i$, for $i=1,\ldots,k$. In other words, given a diagonal matrix $\rho$, it is always possible to find a cyclic subgroup that permutes the elements on the matrix according to the evolution. \\

Let us consider now the following trivial but important:
\begin{remark} \label{conjsub}
If $S$ and $T$ are conjugated subgroups, then the evolution function $F^{T}_{(t,0)}$ can be deduced from the evolution function $F^{S}_{(t,0)}$ because 
\begin{itemize}
\item $|S|=|T|$ and hence $g_T(t) = g_S(t)$ and $f_T(t) = f_S(t)$;
\item $$\sum_{\sigma \in T} R_{\sigma}  \rho(0) R_{\sigma}^{-1} \, = \sum_{\sigma' = \tau \sigma \tau^{-1} \in S} R_{\sigma'}  \rho(0) R_{\sigma'}^{-1},$$ where $\tau \in T$ and $R_{\sigma'} = R_{\tau} R_{\sigma} R_{\tau}^{-1}$.
\end{itemize} 
\end{remark}
Therefore, by Proposition \ref{equivKop}  and Remark \ref{conjsub}, in order to understand the dynamical evolution, it is enough to consider actions of cyclic subgroups of $\Sigma_n$ onto $\mathbb C^n$ up to conjugation. It is important to recall that the number of conjugacy classes of elements in $\Sigma_n$, that corresponds to the number of cyclic subgroups of $\Sigma_n$ up to conjugacy,  depends  on the number of partitions of $n$ as follows. First we recall that a partition $\mu$ of $n$ is a vector $(\mu_1,\ldots,\mu_r)$, whose entries are positive integers and satisfy $\mu_1+\cdots\mu_r=n$ and $\mu_i\geq \mu_{i+1}$. Given an element $\sigma\in\Sigma_n$, let $\sigma=c_{1}\cdots c_{r}$ be its decomposition into disjoint cycles and, up to renumbering the cycles,  suppose that $|c_i|\geq |c_{i+1}|$, where $|c_i|$ denotes the length of the $i$-th cycle. We can associate to $\sigma$  a partition $\mu_{\sigma}$ of $n$ given by $(|c_1|,\ldots, |c_r|)$. The following facts hold:
\begin{itemize}
 \item[a)] given two element $\sigma_1,\sigma_2\in\Sigma_n$, they are conjugated if and only if $\mu_{\sigma_1}=\mu_{\sigma_2}$;
 \item[b)] the map $[\sigma]\to\mu_{\sigma}$ is a one to one correspondence on the level of conjugacy classes of elements in $\Sigma_{n}$.
\end{itemize}

\section{Action of Kraus maps associated to subgroups  of $\mathbf{\Sigma_n}$}
\label{azione}

Given $S=\langle \sigma\rangle$ be a cyclic subgroup of $\Sigma_n$, we have 
\begin{equation}
K_S=K_{\sigma}=g(t) \textup{Id}_n + f(t) \sum_{i=1}^{|\sigma|-1} R_{\sigma}^{i}
\end{equation}
and 
\begin{equation}
\rho(t) = g^2(t) \rho(0) + f^2(t) \sum_{i=1}^{|\sigma|-1} R_{\sigma}^{i}  \rho(0) R_{\sigma}^{-i}.
\end{equation}
We want to compute an explicit analytic expression for $\rho(t)$.  

Suppose that $\sigma=c_1\cdots c_r$ is the decomposition into disjoint  cycles, including cycles of length one, and with  $|c_i|\geq |c_{i+1}|$. Let  $|c_i|=\mu_i$, for $i=1,\ldots,r$ and set $\mu_0=1$. Clearly $\mu_1+\mu_2+\cdots \mu_r=n$. Since we work up to conjugacy, we can assume that the permutation has the following form
$$\sigma=\left(1\quad 2 \cdots \mu_1\right)\left(\mu_1+1\quad \mu_1+2 \cdots \mu_1+\mu_2\right)\cdots \left(\sum_{j=1}^{r-1}\mu_i\quad 1+\sum_{j=1}^{r-1}\mu_i\cdots n\right).$$
So the $i$-th cycle is 
$$c_i=\left(\sum_{j=0}^{i-1}\mu_j \quad  1+\sum_{j=0}^{i-1}\mu_j \quad \cdots \quad \sum_{j=1}^{i}\mu_j\right).$$
Note that $|\sigma|=\textup{LCM}\{\mu_1,\ldots, \mu_r\}$ , where $\textup{LCM}$ stands for  the least common multiple.

A straightforward computation shows that if $\rho(0)=\textup{diag}(\lambda_1,\ldots, \lambda_n)$ and we set 
\begin{equation}\label{B_i}
B_i=\left(\frac{1}{\mu_i}\sum_{h\in c_i}\lambda_h\right) \textup{Id}_{\mu_i}
\end{equation}
for $i=1,\ldots, r$, then 
\begin{equation}
\label{rho}
 \rho_{\sigma}(t)=\rho(0) e^{-t}+(1-e^{-t})B,
\end{equation}
where $B$ is the block diagonal matrix $B_1\oplus B_2\oplus \cdots \oplus B_r$.\\
We observe that, by equations \eqref{B_i} and \eqref{rho}, the eigenvalues of the matrix $\rho_{\sigma}(t)$ are linear combinations of the eigenvalues of $\rho(0)$ with non negative coefficients and at least one non-zero coefficient.

We can then formulate the following:
\begin{theorem}
The action associated to each cyclic subgroup $< \sigma >$ of $\Sigma_n$ satisfies two properties: 1) it is completely positive and 2) it satisfies the time condition (\ref{semigroup}).
\end{theorem}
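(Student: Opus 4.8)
The plan is to establish the two properties separately, working from the closed-form expression \eqref{rho} rather than from the raw definition of $K_\sigma$; composing two operator-sum maps written directly through the matrices $R_\sigma^i$ would force one to expand all products $R_\sigma^i R_\sigma^j$, which is needlessly cumbersome, whereas \eqref{rho} makes both verifications transparent.

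For property 1, I would observe that the evolution is already displayed in operator-sum (Kraus) form. Since $R_\sigma$ is a real permutation matrix it is orthogonal, so $R_\sigma^{-i}=(R_\sigma^i)^{T}=(R_\sigma^i)^{\dagger}$, and therefore
\begin{equation*}
\rho_\sigma(t)=g^2(t)\,\rho(0)+f^2(t)\sum_{i=1}^{|\sigma|-1}R_\sigma^{i}\,\rho(0)\,(R_\sigma^{i})^{\dagger}=\sum_{\alpha}K_\alpha\,\rho(0)\,K_\alpha^{\dagger},
\end{equation*}
with Kraus operators $K_0=g(t)\,\textup{Id}_n$ and $K_i=f(t)\,R_\sigma^{i}$ for $1\le i\le |\sigma|-1$. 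Any super-operator admitting such a representation is completely positive by the standard structure theorem for completely positive maps, which settles property 1; the identity $g^2(t)+(|\sigma|-1)f^2(t)=1$, immediate from the definitions of $g$ and $f$ with $|S|=|\sigma|$, is exactly condition \eqref{Krauss} and so also yields trace preservation.

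For property 2, I would read \eqref{rho} as an identity between linear operators on the space $D_n(\mathbb C)$ of diagonal matrices. Writing $\Pi\colon D_n(\mathbb C)\to D_n(\mathbb C)$ for the linear averaging map that sends $\rho(0)$ to the block matrix $B=B_1\oplus\cdots\oplus B_r$ of \eqref{B_i}, equation \eqref{rho} becomes $F^{\sigma}_{(t,0)}=e^{-t}\,\mathrm{id}+(1-e^{-t})\,\Pi$, where $\mathrm{id}$ is the identity on $D_n(\mathbb C)$. The key algebraic fact is that $\Pi$ is idempotent, $\Pi^2=\Pi$: averaging each $\lambda_h$ over its cycle block and then averaging again returns the same block means, because after the first step every entry inside a block already equals that block's mean. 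Since the dynamics is time-homogeneous, condition \eqref{semigroup} reduces to $F^{\sigma}_{(s-t,0)}\circ F^{\sigma}_{(t,0)}=F^{\sigma}_{(s,0)}$ for $s\ge t\ge 0$, i.e. $F_a\circ F_b=F_{a+b}$ for $a,b\ge 0$. Expanding the composition and using $\Pi^2=\Pi$, the coefficient of $\mathrm{id}$ is $e^{-(a+b)}$ and the coefficient of $\Pi$ collapses to $e^{-a}(1-e^{-b})+(1-e^{-a})e^{-b}+(1-e^{-a})(1-e^{-b})=1-e^{-(a+b)}$, so the composition is precisely $F_{a+b}$.

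The genuine content is thus confined to the equality $\Pi^2=\Pi$ and to the passage from $K_\sigma$ to the closed form \eqref{rho} (the latter being the ``straightforward computation'' already cited before the theorem). I expect the only real obstacle to be expository: one must make precise that $F^{\sigma}_{(t,0)}$ is a super-operator acting on $D_n(\mathbb C)$, so that composition means operator composition and $\Pi$ acts on whatever diagonal matrix is presented to it, and one must note that $\Pi$ returns a diagonal matrix, so the trajectory never leaves $D_n(\mathbb C)$. Once this framework is fixed, both claims become short linear-algebra computations.
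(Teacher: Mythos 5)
Your proof is correct. Part 1 is essentially identical to the paper's argument: the evolution is already in operator-sum form, so complete positivity follows from Choi's characterization (the paper cites \cite{Choi} for exactly this). Part 2, however, takes a genuinely different and slicker route. The paper verifies the semigroup law by brute force: it expands $F_{(s,t)}\circ F_{(t,0)}$ in the operator-sum picture, regroups the double sum $\sum_{i,j}R_\sigma^{i+j}\rho(0)R_\sigma^{-(i+j)}$ into $(|\sigma|-1)\rho(0)+(|\sigma|-2)\sum_i R_\sigma^i\rho(0)R_\sigma^{-i}$, and then substitutes the explicit exponential formulas for $g^2$ and $f^2$ into a page-long cancellation. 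You instead read (\ref{rho}) as $F_t=e^{-t}\,\mathrm{id}+(1-e^{-t})\Pi$ with $\Pi$ the block-averaging projector, observe $\Pi^2=\Pi$, and reduce the whole check to a two-line identity on the scalar coefficients. This isolates the real mechanism (the dynamics is a convex interpolation between the identity and an idempotent, which automatically generates a one-parameter semigroup) and is considerably more transparent. The only trade-offs are that your argument leans on the closed form (\ref{rho}), which is stated in the paper as a ``straightforward computation'' and holds only for diagonal $\rho(0)$, whereas the paper's expansion never uses diagonality and so verifies the semigroup law for arbitrary initial matrices; and that you implicitly use time-homogeneity $F_{(s,t)}=F_{(s-t,0)}$, which the paper also assumes without comment. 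Within the paper's setting, where everything is restricted to $D_n(\mathbb C)$, your proof is complete.
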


\begin{proof}
We consider   
\begin{equation}
\rho(t) = g^2(t) \rho(0) + f^2(t) \sum_{i=1}^{|\sigma|-1} R_{\sigma}^{i}  \rho(0) R_{\sigma}^{-i}.
\end{equation}
Since a map $A \rightarrow BAB^*$ is completely positive \cite{Choi}, and the sum of completely positive operators is completely positive, we can deduce that the action associated to each cyclic subgroup $< \sigma >$ is completely positive.  

Now we prove the time condition, $F_{(s,t)} \circ F_{(t,0)} (\rho(0)) = F_{(s,0)} (\rho(0))$. Noticing that
$$
F_{(s,0)}(\rho(0)) = \rho_{\sigma}(s) = g^2(s) \rho(0) + f^2(s) \sum_{i=1}^{|\sigma|-1} R_{\sigma}^{i}  \rho(0) R_{\sigma}^{-i}, 
$$
we have
\[
\left.
\begin{array}{l}
F_{(s,t)} \circ F_{(t,0)} (\rho(0)) = g^2(s-t) g^2(t) \rho(0) + g^2(s-t) f^2(t) \sum_{i=1}^{| \sigma | -1} R_{\sigma}^{i}  \rho(0) R_{\sigma}^{-i} + \\
+ f^2(s-t) g^2(t) \sum_{i=1}^{|\sigma|-1} R_{\sigma}^{i}  \rho(0) R_{\sigma}^{-i} + f^2(s-t) f^2(t) \sum_{i,j=1}^{|\sigma|-1} R_{\sigma}^{i+j}  \rho(0) R_{\sigma}^{-(i+j)} = \\
= \{ g^2(s-t) g^2(t) + ( | \sigma | -1) f^2(s-t) f^2(t)  \} \rho(0) + \\
+  \sum_{i=1}^{|\sigma|-1} \{ g^2(s-t) f^2(t) +  f^2(s-t) g^2(t) + (| \sigma | -2) f^2(s-t) f^2(t) \} R_{\sigma}^{i}  \rho(0) R_{\sigma}^{-i} = \\
= \frac{1}{| \sigma |^2} \{ [1+ (| \sigma | -1) e^{-s+t}] [1+ (| \sigma | -1) e^{-t}] + ( | \sigma| -1) [1- e^{-s+t}] [1-e^{-t}] \} \rho(0) + \\
+ \frac{1}{| \sigma |^2} \sum_{i=1}^{| \sigma | -1} \{ [1+ (| \sigma | -1) e^{-s+t}]  [1-e^{-t}] + [1-e^{-s+t}] [1+ (| \sigma | -1) e^{-t}]  + \\ 
+ (| \sigma | -2) [1- e^{-s+t}] [1-e^{-t}]  \} R_{\sigma}^{i}  \rho(0) R_{\sigma}^{-i} =  \\
= \frac{1}{| \sigma |^2} \{ 1+ ( | \sigma| -1) e^{-s+t} + ( | \sigma| -1) e^{-t} + ( | \sigma| -1)^2 e^{-s} + | \sigma| -1 - ( | \sigma| -1) e^{-s+t} + \\
- ( | \sigma| -1) e^{-t} + ( | \sigma| -1) e^{-s} \} \rho(0) + \frac{1}{| \sigma |^2} \sum_{i=1}^{| \sigma | -1} \{ 1 - e^{-t}+ (| \sigma | -1) e^{-s+t} -  (| \sigma | -1) e^{-s} + \\
+1 - e^{-s+t} +  (| \sigma | -1) e^{-t} -  (| \sigma | -1) e^{-s} +  | \sigma | -2 - (| \sigma | -2) e^{-s+t} - (| \sigma | -2) e^{-t} + \\
+ (| \sigma | -2) e^{-s} \} R_{\sigma}^{i}  \rho(0) R_{\sigma}^{-i} =  \\
 = \frac{1}{|\sigma|}[1+ (| \sigma | -1) e^{-s}] \rho(0) +  \frac{1}{|\sigma|}(1-e^{-s}) \sum_{i=1}^{|\sigma|-1} R_{\sigma}^{i}  \rho(0) R_{\sigma}^{-i} = \rho_{\sigma}(s).
\end{array}
\right. 
\]
\end{proof}

Using the explicit description of Formula (\ref{rho}), we can easily deduce a  description of the associated orbit. First of all, notice that 
\begin{equation}
 \label{lim}
\lim_{t\to +\infty }\rho_{\sigma}(t)=B.
\end{equation}
Moreover we have $\rho(0)-\rho_{\sigma}(t)=(1-e^{-t})(\rho(0)-B),$
so it is easy to check that $\rho(0)-\rho(t)$ is a diagonal matrix whose diagonal entries satisfy the system of equations
\begin{equation}
\label{system}
\left\{\begin{array}{l}x_1+x_2+\cdots +x_{\mu_1}=0\\
   x_{\mu_1+1}+x_{\mu_1+2}+\cdots +x_{\mu_1+\mu_2}=0  \\
   \vdots\\
    x_{n-\mu_r}+x_{n-\mu_{r}+1}+\cdots +x_{n}=0.        
\end{array}\right.
\end{equation}

Notice that  if we start with a matrix $\rho(0)$ having $n$ different eigenvalues, that is a \textit{generic} initial condition, the limit point of the orbit of $K_{\sigma}$ with  $\sigma=c_1\cdots c_r$, lies in a closed subspace containing the matrices having at most $r$ distinct eigen\-va\-lues $\nu_1,\ldots, \nu_r$. 

If we start with a matrix $\rho(0)$ having at least one eigenvalue with multiplicity greater then one, all the previous results hold. Nevertheless, we have less freedom in movements: indeed, there exists non-trivial elements of $\Sigma_n$ acting trivially on the submanifold containing it. More precisely, the elements of $\Sigma_n$ that permutes the eigenvalues that are equal act trivially on $\rho(0)$. In order to have a non-trivial action, the different eigenvalues must be shuffled by the permutation. Hence, if, given a partition $\lambda=(\lambda_1,\ldots,\lambda_r)$ of $n$, we denote with $M_{\lambda}$ the submanifold containing matrices having $r$-different eigenvalues with multiplicities $\lambda_1,\ldots,\lambda_r$, the trivial action is carried by the stabilizer  of $M_{\lambda}$  in  $\Sigma_n$. This subgroup can be characterized as that containing $\sigma$ such that $\mu_{\sigma}=(\mu_1,\ldots,\mu_k)$ is a subpartition of $\lambda$, that is there exist indices $1\leq j_1\leq j_2\leq \ldots\leq j_r\leq k$ such that $\lambda_i=\mu_{j_i}+\mu_{j_i+1}+\cdots +\mu_{j_{i+1}-1}+\mu_{j_{i+1}}$, for $i=1,\ldots,r$.

\section{Geometric interpretation and examples}

Let's try to have a more geometric picture. Given $n$ points $P_1,\ldots, P_{n}\in \mathbb C^n$ in general position, we denote the $(n-1)$-simplex having  $P_1,\ldots, P_{n}$ as vertices with $\Delta^{n-1}=\Delta(P_1,\ldots,P_n)$. To each matrix $\rho(0)=\textup{diag}{(\lambda_1,\lambda_2,\ldots,\lambda_n)}$ we can associate a point $\lambda_1 P_1+\lambda_2 P_2+\cdots+\lambda_n P_n$ in $\Delta(P_1,\ldots,P_n)$. Each element $\sigma\in\Sigma_n$ clearly acts on the vertices of $\Delta^{n-1}$ and, by linearity, on the points of the simplex. Given a cycle $c=(i_1\ i_1\ \cdots i_{\mu})$, we denote by $L(c)\subset \mathbb C^n$ the  subspace spanned by  the $n-1$ vectors $P_{i_1}-P_{i_j}$, with $j=2,\ldots,\mu$. Moreover, with the notation $\textup{Bar}(c)$ we indicate the barycenter of the $(\mu-1)$-simplex having vertices $P_{i_1},\ldots, P_{i_{\mu}}$.  Notice that $c$ fixes $\textup{Bar}(c)$. 

Since $K_{\sigma}$ satisfies the range condition, the orbit $\rho_{\sigma}(t)$ gives  a path inside $\Delta^{n-1}$.  In this setting, (\ref{system}) tells us that the orbit is contained in the affine subspace passing trough the point associated to $\rho(0)$ and parallel to the subspace $L(c_1)\oplus L(c_2)\oplus\cdots\oplus L(c_r)$, with $\sigma=c_1\cdots c_r$. Moreover, the limit of the orbit is the intersection point between this  affine subspace and the affine closure of the points  $\textup{Bar}(c_1),\ldots,\textup{Bar}(c_r)$.\\

 \begin{figure}[h]
    \centering\includegraphics[height=60mm]{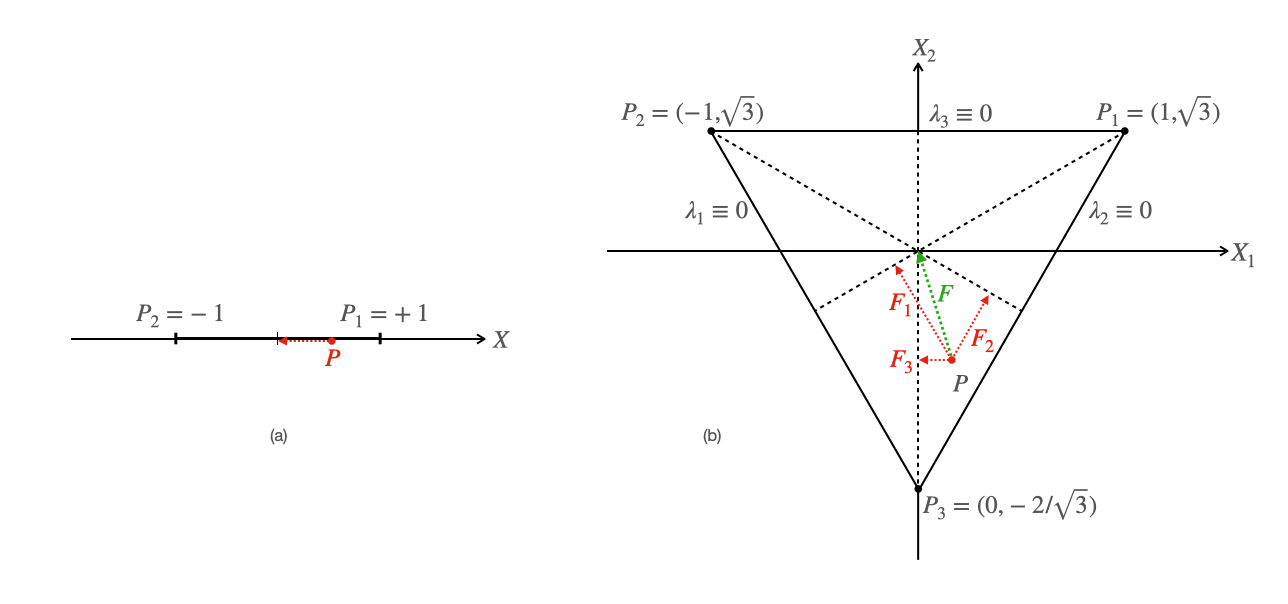}
    \caption{The simplex of diagonal density matrices for (a) $n=1$, i.e. a qubit, and (b) $n=3$, i.e. a qutrit. The arrows show the direction of the time evolution of the operators defined in the text. \label{fig:fig}}
   \end{figure}

As a first example, we can take the case of a qubit, i.e. a Hilbert space of dimension $n=2$, so that a generic diagonal density matrix is of the form $\rho = \textup{diag}(\lambda_1,\lambda_2=1-\lambda_1)$. By setting $X=\lambda_1-\lambda_2 \in [-1,1]$, we can represent the simplex $\Delta^1$ as the segment (convex cone) generated by the two points $P_1, P_2$ with coordinate $X=+1,-1$ respectively, as shown in Fig. 1(a). \\
The unique non-trivial cyclic subgroup is now generated by the permutation $\sigma : (\lambda_1,\lambda_2)\mapsto (\lambda_2,\lambda_1)$. A simple calculation shows that it generates the time evolution:
\begin{eqnarray}
 \rho_{\sigma}(t) & =\textup{diag} ( e^{-t}  \lambda_1 + (1-e^{-t} ) (\lambda_1+\lambda_2)/2 \, , \, e^{-t}  \lambda_2 + (1-e^{-t} ) (\lambda_1+\lambda_2)/2 ) \nonumber \\
 & =\textup{diag} ( e^{-t}  \lambda_1 + (1-e^{-t} ) /2 \, , \, e^{-t}  \lambda_2 + (1-e^{-t} ) /2 ) 
\end{eqnarray}
which tends to the limit point $\rho_\infty = \textup{diag} ( (\lambda_1+\lambda_2)/2 \, , \,  (\lambda_1+\lambda_2)/2 )$. \\

As a second case, we take $n=3$, i.e. the case of a qutrit, whose diagonal density matrices are of the form: $\rho= \textup{diag}(\lambda_1,\lambda_2, \lambda_3=1-\lambda_1-\lambda_2)$.  Setting $X_1= (\lambda_1-\lambda_2)/2$ and $X_2= (\lambda_1+\lambda_2)/2-1/3$, we can represent the simplex $\Delta^2$ in the $X_1-X_2$-plane as the (equilateral) triangle with vertices: $P_1=(1,\sqrt{3}), \, P_2=(-1,\sqrt{3}), \, P_3=(0,-2/\sqrt{3})$, as shown in Fig. 1(b). \\
Now there are different kinds of cyclic subgroups.\\
For example, we can assume $S_1 : (\lambda_1,\lambda_2,\lambda_3)\mapsto (\lambda_1, \lambda_3,\lambda_2)$, which corresponds to a cycle of length 1 and one of length 2. Then, the density matrix evolves in time through a UDM $F_1$, as follows:
\begin{equation}
\rho_{F_1}(t) =\left( \begin{array}{ccc} \lambda_1&~ &~\\
~& e^{-t}  \lambda_2 + (1-e^{-t} ) (\lambda_2+\lambda_3)/2 & ~ \\
~&~& e^{-t}  \lambda_3 + (1-e^{-t} ) (\lambda_2+\lambda_3)/2  \end{array} \right)
\end{equation}
which tends to the limit point $\rho_\infty = \textup{diag} (\lambda_1\, , \,  (\lambda_2+\lambda_3)/2\, , \,  (\lambda_2+\lambda_3)/2 )$. As it is shown in Fig. 1(b), the orbit is parallel to the side $P_2P_3$ of the triangle. Similar orbits, but now parallel to the other sides $P_1P_3$ and $P_1P_2$ are obtained by considering the cyclic subgroups: 
$S_2 : (\lambda_1,\lambda_2,\lambda_3)\mapsto (\lambda_3, \lambda_2,\lambda_1)$ and $S_3 : (\lambda_1,\lambda_2,\lambda_3)\mapsto (\lambda_2, \lambda_1,\lambda_3)$ respectively.\\
We can also consider the maximal cyclic subgroup $S : (\lambda_1,\lambda_2,\lambda_3)\mapsto (\lambda_3, \lambda_1,\lambda_2)$, which yields:
\begin{equation}
\rho_{F}(t)  =\left( \begin{array}{ccc} e^{-t}  \lambda_1 +  (1-e^{-t} )/3 &~ & ~\\
~& e^{-t}  \lambda_2 +  (1-e^{-t} ) /3  & ~ \\
~&~& e^{-t}  \lambda_3 +  (1-e^{-t} ) /3   \end{array} \right) 
\end{equation}
whose limit point is the barycenter of the triangle, i.e. the maximally mixed matrix  $\rho_\infty= \textup{diag}(1/3,1/3,1/3)$.

\bigskip

\begin{flushleft}

\vbox{
Alessia~CATTABRIGA\\
Department of Mathematics, University of Bologna\\
Piazza di Porta San Donato 5, 40126 Bologna, ITALY\\
e-mail: \texttt{alessia.cattabriga@unibo.it}}

\bigskip

\vbox{
Elisa~ERCOLESSI\\
Department of  Physics and Astronomy, University of Bologna\\
and INFN, Sezione di Bologna, via Irnerio 46, 40126 Bologna, ITALY\\
e-mail: \texttt{elisa.ercolessi@unibo.it}}

\bigskip

\vbox{
Riccardo~GOZZI\\
Instituto Superior Tecnico, Universidade de Lisboa\\
 Av. Rovisco Pais 1, 1049-001\\
  Instituto de Telecomunicaçoe, Lisboa, PORTUGAL\\
e-mail: \texttt{ilgozzi@mail.com}}

\bigskip

\vbox{
Erika~MEUCCI\\
School of Advanced International Studies, Johns Hopkins University\\
 via Beniamino Andreatta 3, 40126 Bologna, ITALY\\
e-mail: \texttt{emeucci@jhu.edu}}

\end{flushleft}

\end{document}